\documentclass[12pt]{article}
\usepackage{amsmath,amssymb,amsfonts,amsthm}
\usepackage{pdfsync}
\usepackage{graphicx}
\usepackage{amssymb}
\usepackage{epstopdf}
\usepackage{amsfonts}
\usepackage{mathrsfs}
\usepackage{marvosym}
\usepackage[pdftex]{color}
\usepackage{pdfcolmk}

\usepackage[dvips]{epsfig} \DeclareGraphicsExtensions{.ps,.eps}
\def\R{\mathbb{R}}

\newcommand{\tp}{^{\top}}

\newcommand{\beq}{\begin{equation}}
\newcommand{\eeq}{\end{equation}}
\newcommand{\bea}{\begin{eqnarray}}
\newcommand{\eea}{\end{eqnarray}}

\newcommand{\bsea}{\begin{subeqnarray}}
\newcommand{\esea}{\end{subeqnarray}}
\newcommand{\nn}{\nonumber}

\def\bmat{\left[ \begin{array}}
\def\emat{\end{array} \right]}

\newcounter{acount}

\newtheorem{theorem}{Theorem}

\newtheorem{definition}{Definition}[section]
\newtheorem{lemma}{Lemma}[section]

\newtheorem{conjecture}{Conjecture}[section]

\title{On the Lyapunov equation with the state matrix in companion form}
\author{Augusto Ferrante\thanks{Augusto Ferrante is with University of Padova, Dept. of Information Engineering, Via Gradenigo 6/a, 35131 Padova, Italy. {\tt\small augusto@dei.unipd.it}}}
\begin{document}
\maketitle

\begin{abstract}
We study the continuous-time Lyapunov equation
under the assumption that the state matrix is a Hurwitz companion matrix.  The standard
Lyapunov theory implies that the unique solution $X$ is positive
semidefinite.  Motivated by positive systems, we investigate the 
 question of whether $X$ is entrywise nonnegative.  We prove that
this is the case when the companion matrix has only real eigenvalues.  The
proof reduces each entry of $X$ to a quadratic form associated with a class of
Cauchy-like matrices whose entries are expressed in terms of elementary
symmetric polynomials.  The required nonnegativity then follows from the
positive semidefiniteness of these Cauchy-like matrices.  We also discuss a
stronger total-positivity property: total nonnegativity does not hold in
general, but it is recovered under an additional sign condition on the
expansion of the forcing vector in the eigenbasis of $A\tp$.
\end{abstract}

\section{Introduction}
This paper is dedicated to Prof. Luigi Fortuna, now retired from the University of Catania, as a token of friendship and esteem. 
More than $15$ years ago, Prof. Fortuna presented a conjecture 
about Lyapunov equations \cite{gigi-pc}. This paper addresses at least part of this conjecture
after a long period of thinking.

Let 
\beq\label{carpol}
a(s)=s^n+a_{n-1}s^{n-1}+a_{n-2}s^{n-2}+\dots+ a_{1}s+a_0
\eeq
and assume that $a(s)$ is a Hurwitz polynomial (so that, in particular,
$a_i>0$, for all $i=0,1,\dots, n-1$).

Let 
$A\in\R^{n\times n}$ be the companion matrix associated with $a(s)$:
\beq\label{compA}
A:=\bmat{cccccc}
0&1 &0 &\dots &0\\
0&0 &1 &\ddots &0\\
\vdots&\ddots&\ddots&\ddots&\ddots\\
0&0&0&\dots&1\\
-a_0&-a_1&-a_2&\dots&-a_{n-1}
\emat.
\eeq
Let $Q=Q\tp\in\R^{n\times n}\succeq 0.$
Let $X$ be the only solution of the Lyapunov equation
\beq\label{lyap-gen}
XA+A\tp X=-Q.
\eeq
Notice that by the standard theory on Lyapunov equations $X=X\tp$ is positive semidefinite.

\begin{conjecture}\label{conj}
Let $X$ be the only solution of the Lyapunov equation (\ref{lyap-gen}), where $A$ has the form 
(\ref{compA}), the polynomial $a(s)$ given by (\ref{carpol}) is Hurwitz and $Q=Q\tp\succeq  0$.
Then $X$ is entrywise nonnegative.
\end{conjecture}

It is useful to distinguish two different positivity properties of the
solution of a Lyapunov equation.  The classical one is positivity in the
Loewner order (the partial order defined by the convex cone of positive semi-definite matrices): if $A$ is Hurwitz and $Q=Q\tp\succeq0$, then the unique
solution of (\ref{lyap-gen})
is symmetric and positive semidefinite.  This is the positivity property that
is normally used in Lyapunov stability theory, where $x\mapsto x\tp Xx$ is a
quadratic storage or Lyapunov function.

The question considered here concerns a different order, namely the
componentwise order.  We ask whether the same solution $X$ is entrywise
nonnegative, that is,
\[
X_{ij}\ge0,\qquad i,j=1,\dots,n.
\]
This property is not a  consequence of positive semidefiniteness: a
positive semidefinite matrix may have negative off-diagonal entries.  Thus the
entrywise nonnegativity of $X$ is an additional structural property of the
Lyapunov equation, depending in an essential way on the companion form of the
state matrix.

Entrywise positivity is especially natural in the theory of positive systems.
A continuous-time linear system
\[
\dot x=Ax+Bu,\qquad y=Cx+Du,
\]
is called {\em positive} if nonnegative initial conditions and nonnegative inputs
produce nonnegative states and outputs for all future times.  Equivalently, in
the finite-dimensional continuous-time case, $A$ is a Metzler matrix and
$B,C,D$ are entrywise nonnegative.  Positive systems occur in many models in
which the state variables represent intrinsically nonnegative quantities, such
as masses, concentrations, populations, probabilities, queues, or economic
stocks.  Typical application areas include compartmental systems,
pharmacokinetics, population dynamics, epidemiology, chemical reaction
networks, ecology, economics, and Markov or queueing models; see, for example,
\cite{Luenberger1979,FarinaRinaldi2000,BermanPlemmons1994}.
For such systems, entrywise nonnegativity of matrices arising from the system
has a direct interpretation: it says that the corresponding map preserves the
componentwise order.  In this sense, while positive semidefiniteness is the
standard Lyapunov-theoretic positivity property, entrywise nonnegativity is an
additional positivity property motivated by positive systems.  The purpose of
this paper is to prove that this additional property holds for the Lyapunov
equation above when the state matrix is in companion form and has only real
eigenvalues. The main motivation of Prof. L. Fortuna for addressing this problem is connected to the regulator where the optimal gain can be expressed in terms of the solution of an algebraic Riccati equation (ARE) which can be rewritten as a Lyapunov equation. For the single input case, when the input matrix is entrywise nonnegative, if the solution of the ARE is also entrywise  nonnegative
the optimal regulator can be implemented by using amplifiers with only positive 
voltage in the power supply which is a great hardware advantage.

The main result of this paper is to prove Conjecture \ref{conj} at least in a special case.

\begin{theorem}\label{main-res}
Under the assumptions of Conjecture 
\ref{conj}, if $A$ has only real 
eigenvalues then $X$ is entrywise nonnegative.
\end{theorem}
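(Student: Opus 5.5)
By linearity of the map $Q\mapsto X$ and the spectral decomposition $Q=\sum_r v_rv_r\tp$ with $v_r\in\R^n$, it suffices to prove the claim for rank-one forcing $Q=vv\tp$ with $v$ arbitrary (of any sign pattern). By continuity of $X$ in the coefficients $a_0,\dots,a_{n-1}$ (the Lyapunov operator stays invertible on the Hurwitz set), we may also assume that the eigenvalues $\lambda_k=-\mu_k$, $k=1,\dots,n$, are real, negative and distinct. The repeated-root case then follows by perturbing the roots to distinct negative reals and passing to the limit, since the set of entrywise nonnegative matrices is closed.

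With distinct eigenvalues, I would diagonalize $A\tp$ explicitly. The right eigenvectors of the companion matrix (\ref{compA}) are the Vandermonde vectors $(1,\lambda_k,\dots,\lambda_k^{n-1})\tp$. The eigenvectors $w_k$ of $A\tp$ (the left eigenvectors of $A$) are given by the coefficients of $a(s)/(s-\lambda_k)=\prod_{l\neq k}(s+\mu_l)$. Hence, up to ordering, $(w_k)_i=e_{n-i}(\mu_1,\dots,\widehat{\mu_k},\dots,\mu_n)$, an elementary symmetric polynomial of the remaining $\mu$'s; these are all positive. Expanding $v=\sum_k c_kw_k$ and using $X=\int_0^\infty e^{A\tp t}vv\tp e^{At}\,dt$ gives
\[
X=\sum_{k,l}\frac{c_kc_l}{\mu_k+\mu_l}\,w_kw_l\tp ,
\qquad
X_{ij}=c\tp M^{(ij)}c,
\]
where $M^{(ij)}_{kl}=\tfrac12\bigl[(w_k)_i(w_l)_j+(w_k)_j(w_l)_i\bigr]/(\mu_k+\mu_l)$. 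Since $c\in\R^n$ is arbitrary, the theorem is therefore equivalent to the statement that each of these Cauchy-like matrices $M^{(ij)}$ is positive semidefinite. For $i=j$ this is immediate: $M^{(ii)}=D\,C\,D$ with $D=\mathrm{diag}((w_k)_i)$ and $C$ the Cauchy matrix $[1/(\mu_k+\mu_l)]$, which is positive definite. For $i\neq j$ this is the real content.

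The main obstacle is proving $M^{(ij)}\succeq0$ for $i\ne j$. The plan is to exploit the structure of elementary symmetric polynomials. Using $e_p(\mu\setminus\mu_k)=\sum_{r=0}^{p}(-\mu_k)^{p-r}e_r(\mu)$, I would express $(w_k)_i$ as a polynomial in $\mu_k$. Then I would try to write the numerator $(w_k)_i(w_l)_j+(w_k)_j(w_l)_i$ as a combination of terms that either factor through $(\mu_k+\mu_l)$, which cancels the denominator, or have the form $f(\mu_k)f(\mu_l)$ and $\mu_k^a\mu_l^b+\mu_k^b\mu_l^a$. The goal is a displacement-type identity for $M^{(ij)}$, so that $M^{(ij)}$ becomes a sum of Hadamard products of the positive definite Cauchy kernel with rank-one Gram terms, i.e.\ a sum $\sum_m D_mCD_m$ of positive semidefinite congruences, possibly plus a Gram matrix of positive functions.

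Concretely, I would first try the integral representation $1/(\mu_k+\mu_l)=\int_0^\infty e^{-\mu_kt}e^{-\mu_lt}\,dt$ together with $\mu_k e^{-\mu_k t}=-\tfrac{d}{dt}e^{-\mu_kt}$ and integration by parts. This should move powers of $\mu_k$ and $\mu_l$ across the integral and show that $c\tp M^{(ij)}c$ equals a sum of squared $L^2$ norms of $\sum_kc_k\,\phi_m(\mu_k)e^{-\mu_kt}$ plus nonnegative boundary terms. I would check the cases $n=2,3$ by hand first to identify the right grouping before attempting the general identity.
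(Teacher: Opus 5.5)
Your reduction is correct and matches the paper's: rank-one forcing, perturbation to distinct negative real roots, the eigenvectors $w_k$ of $A^\top$ with $(w_k)_i=e_{n-i}(\mu\setminus\mu_k)$, the expansion $q=\sum_k c_kw_k$ with $c\in\R^n$ free, and $X_{ij}=c^\top M^{(ij)}c$. The theorem is therefore equivalent to $M^{(ij)}\succeq0$. The gap is that for $i\neq j$ this equivalence is all you establish. The positive semidefiniteness of $M^{(ij)}$ is the entire content of the theorem, and your last two paragraphs describe what you would try rather than exhibiting any identity. The paper handles this step by writing $M^{(ij)}=\tfrac12\Delta C\Delta$, where $\Delta=\mathrm{diag}\bigl((w_1)_j,\dots,(w_n)_j\bigr)$ is positive because the roots are real and negative. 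Here $C_{kl}=(r_k+r_l)/(\mu_k+\mu_l)$ with $r_k=(w_k)_i/(w_k)_j$, a Cauchy-like matrix related to Kwong (anti-Loewner) matrices. The paper then invokes a separately proved theorem that this $C$ is positive semidefinite for every pair $(i,j)$. Without that result, or a proof of it, your argument covers only the diagonal entries, which were already nonnegative because $X\succeq0$.

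Your plan also cannot work by formal cancellation of $\mu_k+\mu_l$ with the coefficients $e_r(\mu)$ treated as independent of the nodes, because the positivity is sharp. For $n=2$ you get $M^{(12)}=e_1(\mu)\,C_0-\tfrac12\mathbf{1}\mathbf{1}^\top$ with $C_0=[1/(\mu_k+\mu_l)]$. This equals $\tfrac{\mu_1\mu_2}{2}uu^\top$ with $u=(1/\mu_1,1/\mu_2)^\top$, so it is singular. Replacing $e_1(\mu)$ by any smaller number destroys positive semidefiniteness, since $C_0\succ0$ is negative on the null vector once subtracted. So the sum-of-squares or integration-by-parts identity you want must use the coupling between coefficients and nodes. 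One such coupling is that $\phi(t)=\sum_k c_ke^{-\mu_kt}$ satisfies $\prod_l(\tfrac{d}{dt}+\mu_l)\phi=0$; for $n=2$ this gives $c^\top M^{(12)}c=\tfrac{1}{2e_2(\mu)}\bigl(\phi'(0)+e_1(\mu)\phi(0)\bigr)^2$. Producing such an identity for general $n$ and all $i\neq j$ is exactly the missing work, and it is what the cited positivity theorem for the Cauchy-like matrix supplies. As it stands, the proposal does not prove the theorem.
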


\begin{remark}
Since $X$ is clearly positive semidefinite, it is obvious that the diagonal elements of $X$ are nonnegative.
Moreover, it is not difficult to show by direct computation that the elements $X_{1,n}$ and $X_{n,1}$ namely the entries of $X$ in the top-right and in the bottom-left corners are also nonnegative.
Finally, the cases $n=2,3,4$ can be worked out by brute force and, in these cases it is possible to see that the conjecture holds.
\end{remark}

\section{Problem simplification}

In this section, we show that we can make two useful assumptions without loss of generality.
This holds equally for the original conjecture and for our main result. 

First of all notice that\footnote{This first simplification was first suggested to me by my dear friend, Prof. Harald Wimmer, \cite{harald-pc}.}, since any positive semidefinite matrix $Q=Q\tp$ can be written as the sum of rank 1 positive semidefinite  matrices, i.e.
\beq\label{qsum}
Q=\sum_{i=1}^k q_iq_i\tp,\quad q_i\in\R^n,
\eeq
we can assume  without loss of
generality that $Q=qq\tp,$ with $q\in\R^n$. 
In fact, by linearity, if the solution $X$ of (\ref{lyap-gen}) is entrywise non-negative for all
$Q=qq\tp,$ then it is also  entrywise non-negative for any $Q$ of the form (\ref{qsum}).

Second, we can assume without loss of generality that $A$ given by (\ref{compA}) is diagonalizable.
Indeed, as long as the spectrum of $A$ is in the open left-half plane the solution $X$ 
of (\ref{lyap-gen}) is a continuous function of $A$ and of $Q$. Assume,
that Conjecture \ref{conj} holds whenever $A$ is diagonalizable and suppose, by way of contradiction, that there exist $Q=Q\tp\succeq  0$ and a non-diagonalizable  $A_0$ such that, for $A=A_0$, at least one entry  of the solution $X_0$ of (\ref{lyap-gen}) is negative, say  $[X_0]_{hk}<0$.
Let
$$
a_0(s)=\det(sI-A_0)=\prod_{i=1}^k (s+\lambda_i)^{\nu_i},\qquad \lambda_i\neq\lambda_j\ \forall i\neq j.
$$
We can perturb this polynomial by changing each term of the form $(s+\lambda_i)^{\nu_i}$ with
$\prod_{l=0}^{\nu_i-1}(s+\lambda_i+l\varepsilon),$ where $\varepsilon$ is a constant.
In this way we obtain a polynomial $a_\varepsilon(s)$ whose zeros are all simple, as long as
$\varepsilon>0$ is sufficiently small. Moreover, by selecting a small enough value of $\varepsilon>0$, the coefficients of $a_\varepsilon(s)$ can be made arbitrarily close to those of $a_0(s)$. Notice, in passing, that even if some of the zeros of $a_0(s)$ are not real, the coefficients of $a_\varepsilon(s)$ remain  real.
In conclusion, by selecting a small enough value of $\varepsilon>0$, the companion matrix $A_\varepsilon$ associated with the polynomial $a_\varepsilon(s)$  can be made diagonalizable and arbitrarily close to
$A_0$. By continuity, the associated solution $X_\varepsilon$ can be made arbitrarily close to
$X_0$ and hence, there exists $\varepsilon>0$ for which $[X_\varepsilon]_{hk}<0$.
This is a contradiction because $A_\varepsilon$ is diagonalizable.
Finally, notice that since any companion matrix is non-derogatory, then $A$ diagonalizable is equivalent to the fact that all its eigenvalues have algebraic multiplicity equal to $1$.

We observe that these simplifications can be made for the general problem as they do not require the assumption of $A$ having real eigenvalues.

\section{Preliminaries}

We start by recalling the definition of elementary symmetric polynomials.
\begin{definition}
Let $x_1,\dots,x_r$ be real variables and let $m$ be an integer with
$0\le m\le r$.
The \emph{elementary symmetric polynomial of degree $m$}
in the variables $x_1,\dots,x_r$ is defined by
\[
\sigma_m(x_1,\dots,x_r)
:=
\sum_{1\le i_1<i_2<\cdots<i_m\le r}
x_{i_1}x_{i_2}\cdots x_{i_m}.
\]
By convention, $\sigma_0(x_1,\dots,x_r):=1$ and $\sigma_m(x_1,\dots,x_r)=0$ for $m<0$ and for $m>r$.
\end{definition}

Next, we derive a full set of eigenvectors of $A\tp$ assuming that $A$ given by (\ref{compA}) is diagonalizable. To this aim recall that if $A$ is a diagonalizable companion matrix (equivalently, if  $A$ is a  companion matrix whose eigenvalues have all algebraic multiplicity equal to $1$) then the changes of basis matrices diagonalizing $A$ and $A\tp$ 
have Vandermonde structure.
More explicitly (see e.g.\cite{turnbull1961introduction} or the Wikipedia page on companion matrices), if $A$ is given by (\ref{compA}) and the associated polynomial $a(s)$ has $n$ distinct zeros $-\lambda_i$  
then
$$
VA\tp V^{-1} = D,
$$
where 
$$
V:=\bmat{cccccc}
1&-\lambda_1 &(-\lambda_1)^2 &\dots &(-\lambda_1)^{n-1}\\
1&-\lambda_2 &(-\lambda_2)^2 &\dots &(-\lambda_2)^{n-1}\\
1&-\lambda_3 &(-\lambda_3)^2 &\dots &(-\lambda_3)^{n-1}\\
\vdots&\vdots&\dots&\vdots&\vdots\\
1&-\lambda_n &(-\lambda_n)^2 &\dots &(-\lambda_n)^{n-1}
\emat, $$
is the Vandermonde matrix associated with the nodes $-\lambda_i$ and
$$D:= \bmat{cccccc}
-\lambda_1 &0 &\dots&\dots &0\\
0&-\lambda_2 &0 &\dots &0\\
0&0&-\lambda_3 &\ddots &0\\
\vdots&\vdots&\dots&\vdots&\vdots\\
0&\dots & \dots&\ddots &-\lambda_n
\emat
$$
is a diagonal matrix having the eigenvalues of $A$ in the main diagonal.
We now need an expression for $V^{-1}$. This is the inverse of a Vandermonde matrix 
which has the following explicit form, see e.g. \cite{EISINBERG20061384}:
denote by $\sigma_{m}:=\sigma_{m}\{\lambda_1, \dots, \lambda_n\}$
the  elementary symmetric polynomial of degree $m$
in the variables $\{\lambda_1, \dots, \lambda_n\}$ and by
$\sigma_{m}^{(j)}:=\sigma_{m}\{\lambda_1, \dots,\lambda_{i-1},\lambda_{i+1},\dots \lambda_n\}$
the  elementary symmetric polynomial of degree $m$
in the variables $\{\lambda_1, \dots, \lambda_n\} \setminus \{\lambda_j\}$. Then the elements of $V^{-1}$ are
\begin{equation}\label{eigvinv}
[V^{-1}]_{ij} = \frac{\sigma_{n-i}^{(j)}}{\prod_{\substack{k=1 \\ k \neq j}}^n (\lambda_k - \lambda_j)}.
\end{equation}
The columns of $[V^{-1}]_{ij}$ are a full set of eigenvectors of $A\tp$ corresponding to distinct eigenvalues
$-\lambda_j$. Since all the elements on each such column has the same denominator, we can eliminate such denominators and still have a set of eigenvectors of $A\tp$ corresponding to distinct eigenvalues.
In conclusion, we have the following result.
\begin{lemma}
Assume that $A$ is diagonalizable and let $-\lambda_i$, $i=1,\dots,n$ be its distinct eigenvalues.
Then
\beq\label{eigvj}
v_j:=\begin{bmatrix} \sigma_{n-1}^{(j)}\\ \vdots\\ \sigma_0^{(j)}\end{bmatrix},\qquad j=1,\dots,n
\eeq
is such that
$$
A\tp v_j=-\lambda_j v_j,\qquad j=1,\dots,n.$$
\end{lemma}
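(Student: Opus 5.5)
We verify the eigenvector relation directly from the structure of $A\tp$, using the generating-function identity for elementary symmetric polynomials. The key fact is
\[
\prod_{k\neq j}(s+\lambda_k)=\sum_{m=0}^{n-1}\sigma_{m}^{(j)}\,s^{\,n-1-m},
\qquad
a(s)=\prod_{k=1}^n(s+\lambda_k)=(s+\lambda_j)\prod_{k\neq j}(s+\lambda_k),
\]
so that $a_i=\sigma_{n-i}$ for $i=0,\dots,n-1$. Write $v_j=(c_1,\dots,c_n)\tp$ with $c_i=\sigma_{n-i}^{(j)}$; thus $c_i$ is the coefficient of $s^{i-1}$ in $p_j(s):=\prod_{k\neq j}(s+\lambda_k)$, and $c_n=\sigma_0^{(j)}=1$.

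Next we compute $A\tp v_j$ from the form (\ref{compA}). The matrix $A\tp$ has ones on the subdiagonal and last column $(-a_0,\dots,-a_{n-1})\tp$. Hence
\[
(A\tp v_j)_1=-a_0c_n,\qquad (A\tp v_j)_i=c_{i-1}-a_{i-1}c_n\quad (i=2,\dots,n).
\]
Since $c_n=1$, the claim $A\tp v_j=-\lambda_j v_j$ reduces to the scalar identities
\[
a_0=\lambda_j c_1,\qquad a_{i-1}=c_{i-1}+\lambda_j c_i\quad (i=2,\dots,n).
\]

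These identities are exactly the coefficient comparison in $a(s)=(s+\lambda_j)p_j(s)$. The coefficient of $s^{i-1}$ on the right is $c_{i-1}+\lambda_j c_i$, with the convention $c_0=0$. In symmetric-polynomial terms this is the standard recursion
\[
\sigma_m=\sigma_m^{(j)}+\lambda_j\sigma_{m-1}^{(j)},
\]
applied with $m=n-i+1$. The conventions $\sigma_{m}^{(j)}=0$ for $m>n-1$ and $\sigma_0=1$ handle the boundary cases $i=1$ and $i=n$.

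The only step requiring care is index bookkeeping: matching $a_i$ with $\sigma_{n-i}$, matching the entries of $v_j$ with coefficients of $p_j$, and checking the two boundary rows. As a consistency check, the result also follows from (\ref{eigvinv}). The columns of $V^{-1}$ are eigenvectors of $A\tp$, because $A\tp V^{-1}=V^{-1}D$, and $v_j$ is column $j$ of $V^{-1}$ multiplied by the nonzero scalar $\prod_{k\neq j}(\lambda_k-\lambda_j)$. Both routes give the lemma; I would present the direct computation above, since it does not depend on the cited inverse formula.
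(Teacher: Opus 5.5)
Your proof is correct, but your main argument takes a different route from the paper's. The paper does not verify the eigen-relation directly. It starts from the Vandermonde diagonalization $VA^{\top}V^{-1}=D$, so the columns of $V^{-1}$ are eigenvectors of $A^{\top}$. It then reads off $v_j$ from the cited explicit inverse-Vandermonde formula (\ref{eigvinv}) by clearing the common column denominator $\prod_{k\neq j}(\lambda_k-\lambda_j)$; this is exactly what you relegate to a ``consistency check''.

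Your route reduces $A^{\top}v_j=-\lambda_j v_j$ to comparing coefficients in $a(s)=(s+\lambda_j)\prod_{k\neq j}(s+\lambda_k)$. This is self-contained, and the index bookkeeping and boundary rows check out. It also shows that the eigen-relation itself needs neither distinct nor real roots, since $c_n=1$ guarantees $v_j\neq 0$. Distinctness is only needed afterwards, so that the $v_j$ are linearly independent and span $\mathbb{R}^n$.

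What the paper's route buys is the identification of $v_j$ as a rescaled column of $V^{-1}$, namely $Vv_j=\prod_{k\neq j}(\lambda_k-\lambda_j)\,e_j$. This identity is used again in the total-positivity theorem. Your polynomial viewpoint recovers it at no cost: $(Vv_j)_m=\sum_i(-\lambda_m)^{i-1}c_i=p_j(-\lambda_m)$, which vanishes for $m\neq j$ and equals $\prod_{k\neq j}(\lambda_k-\lambda_j)$ for $m=j$.
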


\section{Proof of the main result}

Consider the equation 
\beq
\label{lyap-red}
XA+A\tp X=-qq\tp,\quad q\in\R^n
\eeq
and assume that $A$ is diagonalizable and has eigenvalues
$-\lambda_i$, $i=1,\dots n$, with $\lambda_i\in\R$ 
and order them as follows:
$$0<\lambda_1 < \lambda_2 <\dots< \lambda_n.
$$

Since the vectors $v_j$ defined in (\ref{eigvj}) correspond to distinct eigenvalues, they are linearly independent and hence they span the whole $\R^n$.
As a consequence, there exist real numbers $a_i$, $i=1,\dots,n$, such that
$q=\sum_{i=1}^n a_i v_i$.
Set $\mathbf{a}:=[a_1\ a_2\ \dots\  a_n]^\top$.
We now use the integral formula for the solution of Lyapunov equations and get
\begin{align} \nn
X&=\int_{0}^{\infty}\exp(A\tp t)qq\tp\exp(A t)dt 
\\
\nn
&=
  \sum_{i=1}^n
  \sum_{j=1}^n
  a_ia_j
  \int_{0}^{\infty}\exp(A\tp t)v_i v_j\tp\exp(A t)dt 
  \\
  \nn
  &=
  \sum_{i=1}^n
  \sum_{j=1}^n
  a_ia_j
  \int_{0}^{\infty}\exp(-\lambda_i t)\exp(-\lambda_j t) v_i v_j\tp dt 
  \\
  &=
  \sum_{i=1}^n
  \sum_{j=1}^n
  \frac{1}{\lambda_i +\lambda_j}
  a_ia_j v_i v_j\tp 
\end{align} 

Thus, for $k,l=1,\dots,n$, the entry $X_{kl}$ is given by
\begin{align} \nn
X_{kl}&=
  \sum_{i=1}^n
  \sum_{j=1}^n
  \frac{1}{\lambda_i +\lambda_j}
  a_ia_j [v_i v_j\tp]_{kl}\\
  \nn
  &=
  \sum_{i=1}^n
  \sum_{j=1}^n
  a_i
  \frac{\sigma_{n-k}^{(i)}\sigma_{n-l}^{(j)}}{\lambda_i +\lambda_j}
  a_j\\
  \nn
  &=
  \sum_{i=1}^n
  \sum_{j=1}^n
  a_i
  \frac{\sigma_{n-k}^{(i)}\sigma_{n-l}^{(j)}}{\lambda_i +\lambda_j}
  a_j\\
  \nn
  &=
  \mathbf{a}^\top \hat{M} \mathbf{a}\\
 &=
  \frac{1}{2} \mathbf{a}^\top M \mathbf{a}
\end{align}
where $\hat{M}$ is the matrix whose elements are $\hat{M}_{ij}:=\frac{\sigma_{n-k}^{(i)}\sigma_{n-l}^{(j)}}{\lambda_i +\lambda_j}$ and $M:=\hat{M}+\hat{M}\tp$ so that 
$$M_{ij}:=\frac{\sigma_{n-k}^{(i)}\sigma_{n-l}^{(j)}+\sigma_{n-k}^{(j)}\sigma_{n-l}^{(i)}}{\lambda_i +\lambda_j}$$

Thus, we are reduced to show that for any $k,l=1,\dots,n$ the matrix $M$ just defined is positive semidefinite.
To this aim, 
let $$r_i(k,l):= \frac{\sigma_{n-k}^{(i)}}{\sigma_{n-l}^{(i)}}$$
and observe that
$$M_{ij}:=\frac{\sigma_{n-k}^{(i)}\sigma_{n-l}^{(j)}+\sigma_{n-k}^{(j)}\sigma_{n-l}^{(i)}}{\lambda_i +\lambda_j}=\sigma_{n-l}^{(j)}\sigma_{n-l}^{(i)} \left( \frac{r_i(k,l) + r_j(k,l)}{\lambda_i + \lambda_j} \right) $$
so that
$$M=\Delta C\Delta$$
with $$\Delta:={\rm diag}(\sigma_{n-l}^{(1)}, \dots, \sigma_{n-l}^{(n)})$$
and
$C$ is a Cauchy-like matrix whose elements are
\beq\label{Cauchy-likemat}
C_{ij}=\frac{r_i(k,l) + r_j(k,l)}{\lambda_i + \lambda_j}.
\eeq
Since $\Delta=\Delta\tp$ is nonsingular, $M\succeq 0$ is equivalent to $C \succeq 0$.
But the matrix $C$ is known to be positive semidefinite for all $k,l=1,\dots,n$, \cite{ferrante2026positivityclasscauchylikematrices}.
This concludes the proof.

\section{Totally positive solutions}

In this section we consider a stronger form of positivity.\footnote{Here we follow the notation of \cite{Pinkus-TPM}. Other works define {\em totally
nonnegative} the matrices defined as totally positive by \cite{Pinkus-TPM}
and totally positive the matrices that in \cite{Pinkus-TPM} are called 
strictly
totally positive.}

\begin{definition}
A matrix $P$ is said to be {\em totally positive} if
all its minors are nonnegative; it is said to be {\em strictly
totally positive} if all its minors are strictly positive.
\end{definition}

It is natural to suspect that, under our assumptions, the solution $X$
of (\ref{lyap-red}) is totally nonnegative.
This is in general not the case  as
shown by the following counterexample.
Let  
$$
A=\bmat{ccc}0&    1& 0\\
         0   &       0   &  1\\
   -0.03  &-0.4   & -1.3\emat
   $$
so that the eigenvalues of $A$ are circa $-0.89$,
$-0.11$ and
$-0.3$.
Let $Q=qq\tp,$ with 
$q:=[4\ \ 36\ \ 29]\tp$.
Then the solution $X$ of (\ref{lyap-red}) is
$$
X\simeq\bmat{ccc}
33.4  &  294.8  &  266.7\\
294.8  &  2607.9  &  2356.9\\
266.7  &  2356.9  &  2136.5\emat
$$
and it is easy to check that the determinant of the $2\times 2$ upper-left corner of $X$ is negative.

There is however, an interesting case in which the solution is indeed totally positive.

\begin{theorem}
Under the assumptions of Theorem \ref{main-res}, if 
$$
q=\sum_{j=1}^n \alpha_j v_j$$
where $v_j$ are given by (\ref{eigvj}) and $\alpha_j\geq 0$, 
then the solution $X$ of (\ref{lyap-red}) is totally positive.
\end{theorem}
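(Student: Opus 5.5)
The plan is to write $X$ as a product of three matrices, each totally positive in the sense of the Definition above, and then apply the Cauchy--Binet formula. By the computation in the proof of Theorem~\ref{main-res}, with $q=\sum_j\alpha_j v_j$,
\[
X=\sum_{i,j}\frac{\alpha_i\alpha_j}{\lambda_i+\lambda_j}\,v_iv_j\tp = S\,D_\alpha\,\Gamma\,D_\alpha\,S\tp ,
\]
where $S:=[v_1\ \cdots\ v_n]$ (so $S_{kj}=\sigma^{(j)}_{n-k}$), $D_\alpha:={\rm diag}(\alpha_1,\dots,\alpha_n)$ and $\Gamma_{ij}:=1/(\lambda_i+\lambda_j)$. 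Here the eigenvalues are ordered as $0<\lambda_1<\dots<\lambda_n$.

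By Cauchy--Binet, every minor of a product of matrices is a sum of products of minors of the factors. Hence a product of totally positive matrices is totally positive, and it suffices to show that each factor is totally positive.
\begin{itemize}
\item $D_\alpha$: a diagonal matrix with nonnegative diagonal entries is trivially totally positive, since its minors are either $0$ or products of the $\alpha_j$.
\item $\Gamma$: this is a Cauchy matrix with increasing positive nodes. The classical Cauchy determinant formula gives every minor with increasing row indices $i_1<\dots<i_p$ and column indices $j_1<\dots<j_p$ as
\[
\frac{\prod_{r<s}(\lambda_{i_s}-\lambda_{i_r})(\lambda_{j_s}-\lambda_{j_r})}{\prod_{r,s}(\lambda_{i_r}+\lambda_{j_s})}>0,
\]
so $\Gamma$ is even strictly totally positive.
\item $S\tp$: this is totally positive as soon as $S$ is.
\end{itemize}

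The main step, and the one I expect to be the real obstacle, is showing that $S$ is totally positive. Expanding $\sigma_m^{(j)}=\sum_r(-\lambda_j)^{m-r}\sigma_r$ only gives alternating signs, so I will not work with the entries directly. Instead I will use the Vandermonde relation from the Preliminaries.

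\begin{itemize}
\item By (\ref{eigvinv}), $S=V^{-1}D'$, where $D'={\rm diag}(d_1,\dots,d_n)$ and $d_j=\prod_{k\neq j}(\lambda_k-\lambda_j)$. With the chosen ordering, ${\rm sign}\, d_j=(-1)^{j-1}$.
\item Let $\Sigma:={\rm diag}(1,-1,1,\dots)$. Then $V=W\Sigma$, where $W_{jk}=\lambda_j^{k-1}$ is the Vandermonde matrix on the increasing positive nodes $\lambda_j$. Such a $W$ is classically strictly totally positive.
\item Consequently
\[
S=\Sigma W^{-1}D'=(\Sigma W^{-1}\Sigma)(\Sigma D').
\]
\item The standard consequence of Jacobi's complementary-minor identity (used, e.g., in \cite{Pinkus-TPM}) states that if $W$ is invertible and totally positive, then $\Sigma W^{-1}\Sigma$ is totally positive.
\item $\Sigma D'$ is diagonal with entries $(-1)^{j-1}d_j=|d_j|>0$, so it is totally positive as well.
\end{itemize}

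Hence $S$ is totally positive, as a product of two totally positive matrices.

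As a sanity check, for $n=2$ we get $S=\bigl[\begin{smallmatrix}\lambda_2&\lambda_1\\ 1&1\end{smallmatrix}\bigr]$, whose determinant $\lambda_2-\lambda_1$ is positive. The point to verify carefully is the exact sign pattern, namely that $\Sigma$ acts on the correct side in $V=W\Sigma$ and in the Jacobi step. If it does not, reversing both the row and column orders of $S$ will fix it; this reversal preserves total positivity, and it is applied consistently to $S$ and $S\tp$ in the factorization of $X$.

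Combining the three factors via Cauchy--Binet then shows that all minors of $X$ are nonnegative.
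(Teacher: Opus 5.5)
Your proof is correct and is essentially the paper's argument. Since $S=(\Sigma W^{-1}\Sigma)(\Sigma D')$ is exactly the paper's $P_+$ times $\mathrm{diag}(|p_j|)$, your factorization $X=S D_\alpha\Gamma D_\alpha S\tp$ is just a regrouping of the paper's $X=P_+D_0C_0D_0P_+\tp$ with $D_0=\mathrm{diag}(\alpha_j|p_j|)$; it rests on the same ingredients (total positivity of the positive-node Vandermonde and Cauchy matrices, the result that $\Sigma W^{-1}\Sigma$ is totally positive, and Cauchy--Binet), and your sign bookkeeping is already right because $V=W\Sigma$ with $\Sigma$ scaling columns, so the fallback about reversing row and column orders is unnecessary.
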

\begin{proof}
We can represent the solution $X$ 
as
\beq\label{intrepX}
X=\int_{0}^{\infty}\exp(A\tp t)qq\tp\exp(A t)dt=
V^{-1} \int_{0}^{\infty}\exp(D t)Vqq\tp V\tp \exp(D t)dt\;V^{-\top}
\eeq
Now let 
$$w=[w_1\ w_2\ \dots\ w_n]\tp:=Vq=V \sum_{j=1}^n \alpha_j v_j=
\sum_{j=1}^n \alpha_j V  v_j$$
We observe that,in view of Equation (\ref{eigvinv}), $v_j$ are the columns of $V^{-1}$ multiplied by
$p_j:={\prod_{\substack{k=1 \\ k \neq j}}^n (\lambda_k - \lambda_j)}$
so that
$$
V  v_j={\prod_{\substack{k=1 \\ k \neq j}}^n (\lambda_k - \lambda_j)} e_j=p_je_j$$
where $e_j$ is the $j-$th canonical vector.
As a consequence
$
w=\sum_{j=1}^n \alpha_jp_je_j,$ so that
$$w_j= \alpha_j p_j.$$
Notice that, since the eigenvalues $\lambda_i$ are arranged in increasing order, ${\rm sign}(p_j)=(-1)^{j-1}$. Since $\alpha_j\geq 0$, this implies
$w_j\geq 0$ for all odd $j$, and $w_j\leq 0$ for all even $j$.
Hence, if we set $D_w:={\rm{diag}}(w)$ to be the diagonal matrix having the elements of $w$ in the main diagonal and $D_{\pm}:={\rm{diag}}([1,-1,1,\dots (-1)^{n-1}])$, the diagonal matrix 
$$D_0:=D_w D_{\pm}$$ 
has only nonnegative entries and hence is totally positive.

From (\ref{intrepX}), it follows
\beq\label{2intexX}
X=
V^{-1} \int_{0}^{\infty}\exp(D t)w w\tp \exp(D t)dt\;V^{-\top}.
\eeq

We can explicitly compute 
$$K:=\int_{0}^{\infty}\exp(D t)Vqq\tp V\tp \exp(D t)dt$$
whose entries turn out to be
$$
K_{i,j}=\frac{w_iw_j}{\lambda_i+\lambda_j}.$$
As a consequence $K=D_w C_0 D_w$, where $C_0$ is the Cauchy matrix associated with the positive increasing nodes $\lambda_i$. Its entries are
$$
[C_0]_{i,j}:=\frac{1}{\lambda_i+\lambda_j}.$$
We recall that Cauchy matrices associated with positive increasing nodes
are totally positive (indeed strictly totally positive), \cite[page 92]{Pinkus-TPM}.

Plugging $K=D_w C_0 D_w$ into (\ref{2intexX}), we get
$$
X=V^{-1}D_w C_0 D_wV^{-\top}.
$$

Now observe that
 $$V=V_+D_{\pm},$$
where 
$$
V_+:=\bmat{cccccc}
1&\lambda_1 & \lambda_1^2 &\dots & \lambda_1^{n-1}\\
1& \lambda_2 & \lambda_2^2 &\dots & \lambda_2^{n-1}\\
1& \lambda_3 & \lambda_3^2 &\dots & \lambda_3^{n-1}\\
\vdots&\vdots&\dots&\vdots&\vdots\\
1& \lambda_n & \lambda_n^2 &\dots & \lambda_n^{n-1}
\emat$$
The matrix  $V_+$ is a Vandermonde matrix associated with the positive and increasing nodes $\lambda_i$ so that it  is totally positive \cite[Section 4.2]{Pinkus-TPM}; moreover  $D_{\pm}=D_{\pm}^{-1}$.
Hence,
$$V^{-1}=D_{\pm}V_+^{-1}=D_{\pm}D_{\pm}P_+ D_{\pm}=P_+ D_{\pm},$$
 where $P_+:=D_{\pm}V_+^{-1} D_{\pm}$ is also totally positive
 \cite[Proposition 1.6]{Pinkus-TPM}.

In conclusion,
$$
X=P_+ D_{\pm} D_w C_0 D_w D_{\pm} P_+\tp = P_+ D_0 C_0 D_0 P_+\tp
$$ 
can be written as a product of totally positive matrices and hence it is itself  totally positive, \cite[Proposition 1.4]{Pinkus-TPM}
\end{proof}

\begin{remark}
Notice that this result is much stronger as total positivity is much more than positive semi-definiteness and entrywise nonnegativity. 
\end{remark}  

\section{Conclusion}

We have considered the Lyapunov equation
\[
XA+A\tp X=-Q,
\qquad Q=Q\tp\succeq0,
\]
where $A$ is a Hurwitz companion matrix.  While the positive
semidefiniteness of the unique solution $X$ follows from the classical
Lyapunov theory, the entrywise nonnegativity of $X$ is a separate and more
special property.  This property is relevant from the point of view of
positive systems, where the componentwise order carries direct physical or
probabilistic meaning.

The main result proves the desired entrywise nonnegativity when the companion
matrix has only real eigenvalues.  After reducing to rank-one right-hand
sides $Q=qq\tp$ and, by approximation, to the case of simple eigenvalues, the
forcing vector is expanded in the eigenbasis of $A\tp$.  Each entry of the
solution is then written as a quadratic form whose matrix has the Cauchy-like
structure (\ref{Cauchy-likemat}).
The positive semidefiniteness of this matrix, which may be viewed as dual to the class of Kwong matrices \cite{Kwong1989,BhatiaSano2009,BhatiaJain2023}, also known as anti-Loewner matrices, implies that every entry of \(X\) is nonnegative.

We also examined whether a stronger total-positivity property can be expected.
A simple example shows that the solution need not be totally nonnegative in
general.  Nevertheless, if the rank-one forcing vector has a nonnegative
expansion in the eigenvectors of $A\tp$, then the solution admits a
factorization into totally nonnegative factors, and hence is itself totally
nonnegative in the terminology used here.

Several questions remain open.  The most natural one is whether the real-root
assumption can be removed, thereby proving the original conjecture for all
Hurwitz companion matrices.  It would also be interesting to identify
additional conditions guaranteeing strict entrywise positivity or strict total
positivity, and to clarify whether analogous phenomena occur for other
structured realizations beyond companion form.

\bibliographystyle{IEEEtran}

\bibliography{Companion}

%
%
%
%
%

\end{document}